\documentclass[11pt]{article}
\usepackage{amsmath, amsthm, amsfonts, amssymb, fullpage, indentfirst, graphicx, subfigure, cite}
\linespread{1.21}

\theoremstyle{plain}
\newtheorem{proposition}{Proposition}
\newtheorem{lemma}{Lemma}
\newtheorem{theorem}{Theorem}

\theoremstyle{definition}
\newtheorem{definition}{Definition}

\newtheorem{assumption}{Assumption}
\newtheorem{algorithm}{Algorithm}
\newtheorem{scenario}{Scenario}
\theoremstyle{remark}
\newtheorem{remark}{Remark}

\newenvironment{algorithmstep}{\ \begin{list}{\labelenumi}{\topsep0in\itemsep0in\parsep0in\labelwidth1in\usecounter{enumi}}}{\hfill$\blacksquare$\end{list}}

\begin{document}

\title{\LARGE\bf Distributed Estimation of Graph Spectrum\footnote{This work was supported by the National Science Foundation under grant CMMI-0900806.}}
\author{Mu Yang and Choon Yik Tang\\ School of Electrical and Computer Engineering\\ University of Oklahoma, Norman, OK 73019, USA\\ {\sf\{muyangwz,cytang\}@ou.edu}}
\date{\today}
\maketitle

\begin{abstract}
In this paper, we develop a two-stage distributed algorithm that enables nodes in a graph to cooperatively estimate the spectrum of a matrix $W$ associated with the graph, which includes the adjacency and Laplacian matrices as special cases. In the first stage, the algorithm uses a discrete-time linear iteration and the Cayley-Hamilton theorem to convert the problem into one of solving a set of linear equations, where each equation is known to a node. In the second stage, if the nodes happen to know that $W$ is cyclic, the algorithm uses a Lyapunov approach to asymptotically solve the equations with an exponential rate of convergence. If they do not know whether $W$ is cyclic, the algorithm uses a random perturbation approach and a structural controllability result to approximately solve the equations with an error that can be made small. Finally, we provide simulation results that illustrate the algorithm.
\end{abstract}

\section{Introduction}\label{sec:intr}

The spectrum of a graph, defined as the set of eigenvalues of either its adjacency or Laplacian matrix, provides a useful characterization of the properties of the graph. For instance, the distribution of such eigenvalues offers insights into the shapes and sizes of communities in a complex network \cite{Newman10}. As another example, the largest and smallest of such eigenvalues provides bounds on the maximum, minimum, and average node degrees \cite{Chung97}. The spectrum has also been used, for example, in chemistry, where it is associated with the stability of molecules \cite{Chung97}, and in quantum mechanics, where it is related to the energy of Hamiltonian systems \cite{Chung97}.

With the continued advances in technology that enable humans to build increasingly complex networks, it is becoming desirable that nodes in a network have the ability to analyze the network themselves, such as decentralizedly computing the spectrum of the network, so that valuable understanding about, say, the network structure may be gained. Motivated by this, a number of distributed algorithms have been proposed in the literature, including \cite{Sahai12, Franceschelli13, TranTMD14} that consider estimation of the entire spectrum of the Laplacian matrix, and \cite{YangP10, Aragues12, LiC13} that focus on estimation of its second smallest eigenvalue (i.e., the algebraic connectivity).

In this paper, we add to the literature by developing a two-stage distributed algorithm, which enables nodes in a graph to cooperatively estimate the spectrum of a matrix $W$ associated with the graph. Unlike in \cite{Sahai12, Franceschelli13, TranTMD14, YangP10, Aragues12, LiC13}, the matrix $W$ can be the adjacency or Laplacian matrix of the graph, a weighted version of these matrices, or any other matrix induced by the graph (see Section~\ref{sec:probform}). To construct the algorithm, we first use a discrete-time linear iteration and the Cayley-Hamilton theorem to convert the original problem into an equivalent problem of solving a set of linear equations of the form $Ax=b$, where every row of $A$ and $b$ is known to a particular node (Section~\ref{sec:formlineequa}). We then show that the matrix $A$ can be made almost surely nonsingular if the nodes happen to know that $W$ is cyclic, but not necessarily so if they do not (Section~\ref{sec:formlineequa}). In the case of the former, we use a Lyapunov approach to asymptotically solve the equations with an exponential rate of convergence (Section~\ref{sec:solvlineequascenknow}). In the case of the latter, we use a random perturbation approach and a structural controllability result to approximately solve the equations with an error that can be made small (Section~\ref{sec:solvlineequascennotknow}). Finally, we provide simulation results that illustrate our distributed algorithm (Section~\ref{sec:simuresu}) and conclude the paper with a word on future research directions (Section~\ref{sec:conc}).

\section{Problem Formulation}\label{sec:probform}

Consider a network modeled as an undirected, connected graph $\mathcal{G}=(\mathcal{V},\mathcal{E})$, where $\mathcal{V}=\{1,2,\ldots,N\}$ denotes the set of $N\ge2$ nodes and $\mathcal{E}\subset\{\{i,j\}:i,j\in\mathcal{V},i\ne j\}$ denotes the set of edges. Any two nodes $i,j\in\mathcal{V}$ are neighbors and can communicate if and only if $\{i,j\}\in\mathcal{E}$. The set of neighbors of each node $i\in\mathcal{V}$ is denoted as $\mathcal{N}_i=\{j\in\mathcal{V}:\{i,j\}\in\mathcal{E}\}$, and the communications are assumed to be delay- and error-free, with no quantization.

Suppose associated with the graph $\mathcal{G}$ is a square matrix $W=[w_{ij}]\in\mathbb{R}^{N\times N}$ satisfying the following assumption:

\begin{assumption}\label{asm:W}
The matrix $W$ is such that for each $i,j\in\mathcal{V}$ with $i\ne j$, if $\{i,j\}\notin\mathcal{E}$, then $w_{ij}=w_{ji}=0$.
\end{assumption}

Note that Assumption~\ref{asm:W} allows $w_{ii}$ $\forall i\in\mathcal{V}$ to be arbitrary. It also allows $w_{ij}$ and $w_{ji}$ $\forall\{i,j\}\in\mathcal{E}$ to be arbitrary and different. Thus, $W$ can be the adjacency or Laplacian matrix of graph $\mathcal{G}$, a weighted version of these matrices, or any other matrix associated with $\mathcal{G}$ as long as Assumption~\ref{asm:W} holds.

Suppose each node $i\in\mathcal{V}$ knows only $\mathcal{N}_i$, $w_{ii}$, and $w_{ij}$ $\forall j\in\mathcal{N}_i$, which it prefers to not share with any of its neighbors due perhaps to security and privacy reasons. Yet, despite having only such local information about the graph $\mathcal{G}$ and matrix $W$, suppose every node $i\in\mathcal{V}$ wants to determine the spectrum of $W$, i.e., all the $N$ eigenvalues of $W$, denoted as
\begin{align}
\lambda^{(1)},\lambda^{(2)},\ldots,\lambda^{(N)}\in\mathbb{C},\label{eq:lambda}
\end{align}
where complex eigenvalues must be in the form of conjugate pairs. Finally, suppose each node $i\in\mathcal{V}$ knows the value of $N$, which is not an unreasonable assumption since each of them wants to determine the values of $N$ objects.

Given the above, the goal of this paper is to devise a distributed algorithm that enables every node $i\in\mathcal{V}$ to estimate the spectrum \eqref{eq:lambda} of $W$ with a guaranteed accuracy.

\section{Forming a Set of Linear Equations}\label{sec:formlineequa}

In this section, we show that by having the nodes execute a discrete-time linear iteration $N$ times, the problem of finding the spectrum \eqref{eq:lambda} of $W$ may be converted into one of solving a set of linear equations with appealing properties.

Observe that although none of the nodes has complete information about $\mathcal{G}$ and $W$, each node $i\in\mathcal{V}$ knows the entire row $i$ of $W$ (since it knows $w_{ii}$ and $w_{ij}$ $\forall j\in\mathcal{N}_i$, and since $w_{ij}=0$ $\forall j\notin\{i\}\cup\mathcal{N}_i$ by Assumption~\ref{asm:W}). This makes the nodes well-suited to carry out the discrete-time linear iteration
\begin{align}
y_i(t+1)=w_{ii}y_i(t)+\sum_{j\in\mathcal{N}_i}w_{ij}y_j(t),\quad\forall i\in\mathcal{V},\;\forall t\in\mathbb{Z}_+,\label{eq:s1}
\end{align}
which in matrix form may be written as
\begin{align}
y(t+1)=Wy(t),\quad\forall t\in\mathbb{Z}_+,\label{eq:s2}
\end{align}
where $\mathbb{Z}_+=\{0,1,2,\ldots\}$, $y_i(t)\in\mathbb{R}$ is maintained in node $i$'s local memory, and
\begin{align}
y(t)=\begin{bmatrix}y_1(t) & y_2(t) & \cdots & y_N(t)\end{bmatrix}^T\in\mathbb{R}^N.\label{eq:s3}
\end{align}
Indeed, \eqref{eq:s1} or \eqref{eq:s2} can be implemented by having each node $i\in\mathcal{V}$ repeatedly send its $y_i(t)$ to every neighbor $j\in\mathcal{N}_i$.

Since \eqref{eq:s2} is a discrete-time linear system, we can write
\begin{align}
y(t)=W^ty(0),\quad\forall t\in\mathbb{Z}_+,\label{eq:s4}
\end{align}
so that
\begin{align}
y(N)=W^Ny(0).\label{eq:s5}
\end{align}
By the Cayley-Hamilton theorem, $W^N$ in \eqref{eq:s5} may be expressed as
\begin{align}
W^N=-x^{(0)}I_N-x^{(1)}W-\cdots-x^{(N-1)}W^{N-1},\label{eq:s6}
\end{align}
where $I_n\in\mathbb{R}^{n\times n}$ is the identity matrix and the scalars $x^{(0)},x^{(1)},\ldots,x^{(N-1)}\in\mathbb{R}$ are the $N$ coefficients of the characteristic polynomial of $W$, i.e.,
\begin{align}
&\operatorname{det}(\lambda I_N-W)=(\lambda-\lambda^{(1)})(\lambda-\lambda^{(2)})\cdots(\lambda-\lambda^{(N)})\nonumber\\
&\quad=\lambda^N+x^{(N-1)}\lambda^{N-1}+\cdots+x^{(1)}\lambda+x^{(0)}.\label{eq:s7}
\end{align}
Substituting \eqref{eq:s6} into \eqref{eq:s5} and using \eqref{eq:s4}, we obtain
\begin{align}
y(N)&=(-x^{(0)}I_N-x^{(1)}W-\cdots-x^{(N-1)}W^{N-1})y(0)\nonumber\\
&=-x^{(0)}y(0)-x^{(1)}y(1)-\cdots-x^{(N-1)}y(N-1).\label{eq:s8}
\end{align}
By using \eqref{eq:s3}, we can rewrite \eqref{eq:s8} as
\begin{align}
\underbrace{\begin{bmatrix}y_1(0) & y_1(1) & \cdots & y_1(N-1)\\ y_2(0) & y_2(1) & \cdots & y_2(N-1)\\ \vdots & \vdots & \ddots & \vdots\\ y_N(0) & y_N(1) & \cdots & y_N(N-1)\end{bmatrix}}_{A}\underbrace{\begin{bmatrix}x^{(0)}\\ x^{(1)}\\ \vdots\\ x^{(N-1)}\end{bmatrix}}_{x^*}=\underbrace{\begin{bmatrix}-y_1(N)\\ -y_2(N)\\ \vdots\\ -y_N(N)\end{bmatrix}}_{b},\label{eq:s9}
\end{align}
where, for later convenience, we denote the matrix on the left-hand side of \eqref{eq:s9} as $A\in\mathbb{R}^{N\times N}$, the vector of characteristic polynomial coefficients as $x^*\in\mathbb{R}^N$, and the vector on the right-hand side of \eqref{eq:s9} as $b\in\mathbb{R}^N$.

The matrix equation \eqref{eq:s9} suggests the following approach for finding the spectrum \eqref{eq:lambda} of $W$: suppose each node $i\in\mathcal{V}$ selects an initial condition $y_i(0)\in\mathbb{R}$. Upon selecting the $y_i(0)$'s, suppose the nodes execute the discrete-time linear iteration \eqref{eq:s1} or equivalently \eqref{eq:s2} $N$ times for $t\in\{0,1,\ldots,N-1\}$. During the execution, suppose each node $i\in\mathcal{V}$ stores the resulting $N+1$ numbers $y_i(0),y_i(1),\ldots,y_i(N-1),y_i(N)$ in its local memory. Then, \eqref{eq:s9} is a set of $N$ linear equations in which each node $i\in\mathcal{V}$ knows the entire row $i$ of $A$ and $b$, and in which the vector $x^*$ of $N$ characteristic polynomial coefficients $x^{(0)},x^{(1)},\ldots,x^{(N-1)}$ of $W$ are the $N$ unknowns. It follows that if $A$ is nonsingular, and if the nodes are able to cooperatively solve \eqref{eq:s9} for the unique $x^*$, then each of them could determine on its own the $N$ eigenvalues $\lambda^{(1)},\lambda^{(2)},\ldots,\lambda^{(N)}$ of $W$ using \eqref{eq:s7} and a polynomial root-finding algorithm.

To realize the above approach, it is necessary that $A$ in \eqref{eq:s9} is nonsingular. To see whether this can be ensured, observe from \eqref{eq:s3}, \eqref{eq:s4}, and \eqref{eq:s9} that $A$ may be expressed as
\begin{align}
A=\begin{bmatrix}y(0) & Wy(0) & \cdots & W^{N-1}y(0)\end{bmatrix}.\label{eq:s10}
\end{align}
In the form \eqref{eq:s10}, $A$ is, interestingly, the controllability matrix of a fictitious discrete-time single-input linear system
\begin{align}
z(t+1)=Wz(t)+y(0)u(t),\quad\forall t\in\mathbb{Z}_+,\label{eq:s11}
\end{align}
where $z(t)\in\mathbb{R}^N$ is its state, $u(t)\in\mathbb{R}$ is its input, $W$ is its state matrix, and $y(0)$ is its input matrix. Hence:

\begin{proposition}\label{pro:AWy0}
The matrix $A$ in \eqref{eq:s9} or \eqref{eq:s10} is nonsingular if and only if the pair $(W,y(0))$ of the system \eqref{eq:s11} is controllable.
\end{proposition}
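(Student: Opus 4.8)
The plan is to observe that equation~\eqref{eq:s10} already displays $A$ as precisely the Kalman controllability matrix of the pair $(W,y(0))$, so the proposition is nothing more than the classical controllability rank condition specialized to the single-input system~\eqref{eq:s11}. First I would write the controllability matrix of~\eqref{eq:s11} as $\mathcal{C}(W,y(0))=\begin{bmatrix}y(0)&Wy(0)&\cdots&W^{N-1}y(0)\end{bmatrix}$ and note that, by~\eqref{eq:s10}, $\mathcal{C}(W,y(0))=A$ identically. The key structural fact to exploit is that $y(0)\in\mathbb{R}^N$ is a single column, so $\mathcal{C}(W,y(0))=A$ is a square $N\times N$ matrix.

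The core step is then to invoke the Kalman rank condition: the pair $(W,y(0))$ is controllable if and only if $\operatorname{rank}\mathcal{C}(W,y(0))=N$. Because $A=\mathcal{C}(W,y(0))$ is square, having rank $N$ is equivalent to being nonsingular. Chaining these equivalences — $A$ nonsingular $\iff\operatorname{rank}A=N\iff\operatorname{rank}\mathcal{C}(W,y(0))=N\iff(W,y(0))$ controllable — delivers both directions of the ``if and only if'' at once, so no separate converse argument is needed.

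If a self-contained justification of the rank condition is preferred over a citation, I would argue reachability directly. Expanding~\eqref{eq:s11} by the discrete variation-of-constants formula gives $z(t)=W^t z(0)+\sum_{k=0}^{t-1}W^{t-1-k}y(0)u(k)$, so the set of states reachable from the origin is the span of $\{y(0),Wy(0),W^2y(0),\ldots\}$, i.e.\ $\operatorname{col}(A)$ together with the higher powers $W^k y(0)$ for $k\ge N$. The only point requiring care is showing these higher powers add nothing new to the span; this is exactly where Cayley--Hamilton — already used in~\eqref{eq:s6} — applies, since it expresses every $W^k$ with $k\ge N$ as a linear combination of $I_N,W,\ldots,W^{N-1}$, forcing the reachable subspace to equal $\operatorname{col}(A)$ precisely. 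Controllability is by definition the statement that this subspace is all of $\mathbb{R}^N$, i.e.\ $\operatorname{rank}A=N$, completing the argument. I expect no genuine obstacle here: the result is essentially a textbook equivalence, and the Cayley--Hamilton closure step is the only substantive verification.
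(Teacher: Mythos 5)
Your proposal is correct and matches the paper's reasoning exactly: the paper simply observes via \eqref{eq:s10} that $A$ is the Kalman controllability matrix of the single-input system \eqref{eq:s11} and invokes the rank condition, which for a square matrix is equivalent to nonsingularity. Your optional self-contained reachability argument via Cayley--Hamilton is a valid elaboration but adds nothing the paper needed.
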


Since $W$ is given by the problem but $y(0)$ may be freely selected by the nodes, it may be possible to select $y(0)$ so that the pair $(W,y(0))$ is controllable. The following definition and lemmas examine this possibility:

\begin{definition}[\!\!\cite{ZhouK96}]\label{def:cyc}
A square matrix with real entries is said to be {\em cyclic} if each of its distinct eigenvalues has a geometric multiplicity of $1$.
\end{definition}

\begin{lemma}\label{lem:Wnotcyc}
If $W$ is not cyclic, then for every $y(0)\in\mathbb{R}^N$, the pair $(W,y(0))$ is not controllable.
\end{lemma}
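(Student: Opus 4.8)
The plan is to exploit the equivalence established in Proposition~\ref{pro:AWy0}, namely that non-controllability of $(W,y(0))$ is the same as singularity of the controllability matrix $A$ in \eqref{eq:s10}, and to produce, for an \emph{arbitrary} $y(0)$, a nonzero left null vector of $A$. The source of such a vector will be the left eigenvectors of $W$ associated with an eigenvalue whose geometric multiplicity exceeds $1$, which exists precisely because $W$ is assumed not cyclic.

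First, since $W$ is not cyclic, Definition~\ref{def:cyc} gives a (possibly complex) eigenvalue $\lambda$ of $W$ whose geometric multiplicity is at least $2$, i.e., $\dim\ker(\lambda I_N-W)\ge2$. Because $\lambda I_N-W$ and its transpose share the same rank, the left eigenspace $\ker(\lambda I_N-W^T)$ also has dimension at least $2$, so I would pick two linearly independent left eigenvectors $v_1,v_2$ satisfying $v_k^T W=\lambda v_k^T$ for $k\in\{1,2\}$.

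Next, I would use a dimension count to align these eigenvectors with the given $y(0)$. The single scalar condition $(\alpha_1 v_1+\alpha_2 v_2)^T y(0)=0$ is one linear equation in the two unknowns $\alpha_1,\alpha_2$, so it admits a solution with $(\alpha_1,\alpha_2)\ne(0,0)$. Setting $v=\alpha_1 v_1+\alpha_2 v_2$, linear independence of $v_1,v_2$ guarantees $v\ne0$, while by construction $v^T W=\lambda v^T$ and $v^T y(0)=0$. Iterating the eigenvector relation then yields $v^T W^k y(0)=\lambda^k v^T y(0)=0$ for every $k\in\{0,1,\ldots,N-1\}$, whence $v^T A=0$ from the expression \eqref{eq:s10} for $A$. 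Thus $A$ has a nontrivial left null vector and is therefore singular, and Proposition~\ref{pro:AWy0} gives that $(W,y(0))$ is not controllable. Since $y(0)$ was arbitrary, the claim follows.

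The only subtle point I expect is the handling of complex quantities: when $\lambda$ is complex the vectors $v_1,v_2,v$ are complex, but this causes no difficulty because $A$ is a fixed real matrix whose determinant is the same whether it is viewed over $\mathbb{R}$ or $\mathbb{C}$, so exhibiting a nonzero complex left null vector already certifies singularity. The remaining steps — the rank/transpose identity and the dimension count producing a nonzero $v$ orthogonal to $y(0)$ — are routine, and the argument is uniform in $y(0)$, which is exactly what the ``for every $y(0)$'' conclusion demands.
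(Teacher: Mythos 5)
Your proof is correct, but it takes a different (and more self-contained) route than the paper. The paper's argument is a two-line appeal to the PBH rank test: since the geometric multiplicity of some eigenvalue $\lambda$ exceeds $1$, we have $\operatorname{rank}(W-\lambda I_N)<N-1$, and appending the single column $y(0)$ can raise the rank by at most one, so $\operatorname{rank}([\,W-\lambda I_N\;|\;y(0)\,])<N$; non-controllability then follows from the cited equivalence (Theorem~3.1 of \cite{ZhouK96}). You instead work directly with the controllability matrix of Proposition~\ref{pro:AWy0}: from the two-dimensional left eigenspace of $\lambda$ (obtained via $\operatorname{rank}(M)=\operatorname{rank}(M^T)$) you extract, by a one-equation-in-two-unknowns dimension count, a nonzero left eigenvector $v$ orthogonal to $y(0)$, which then annihilates every column $W^k y(0)$ of $A$ and certifies singularity. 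The two arguments hinge on the same underlying fact---a single input column cannot compensate for a rank deficiency of two or more---but yours is dual (left eigenvectors versus column rank) and essentially re-proves the relevant direction of the PBH eigenvector test from scratch, so it needs no external controllability theorem beyond Proposition~\ref{pro:AWy0} itself; the price is a slightly longer argument and the need to handle the complex case explicitly, which you do correctly by noting that a nonzero complex left null vector of the real matrix $A$ already forces $\det A=0$.
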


\begin{proof}
Suppose $W$ is not cyclic and let $y(0)\in\mathbb{R}^N$ be given. Then, by Definition~\ref{def:cyc}, $W$ has an eigenvalue $\lambda\in\mathbb{C}$ whose geometric multiplicity exceeds $1$, i.e., $\operatorname{rank}(W-\lambda I_N)<N-1$. Since $y(0)$ is a column vector, $\operatorname{rank}([W-\lambda I_N\;|\;y(0)])<N$. Therefore, by statements~(i) and~(iv) of Theorem~3.1 in \cite{ZhouK96}, the pair $(W,y(0))$ is not controllable.
\end{proof}

\begin{lemma}\label{lem:Wcyc}
If $W$ is cyclic, then for almost every $y(0)\in\mathbb{R}^N$, the pair $(W,y(0))$ is controllable.
\end{lemma}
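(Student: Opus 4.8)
The plan is to invoke Proposition~\ref{pro:AWy0} to restate the goal: I must show that for almost every $y(0)\in\mathbb{R}^N$, the matrix $A$ in \eqref{eq:s10} is nonsingular, i.e., $\det A\ne0$. The first observation is that each column $W^k y(0)$ of $A$ depends linearly on the entries of $y(0)$, so $\det A$ is a polynomial (in fact homogeneous of degree $N$) in the $N$ components of $y(0)$. A standard fact is that the zero set of a polynomial on $\mathbb{R}^N$ has Lebesgue measure zero unless the polynomial vanishes identically. Hence it suffices to prove that $\det A$ is not the zero polynomial, equivalently, that there exists at least one $y(0)\in\mathbb{R}^N$ for which $(W,y(0))$ is controllable.

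To exhibit such a $y(0)$, I would use the eigenvector (PBH) form of the controllability test from \cite{ZhouK96}: $(W,y(0))$ is controllable if and only if there is no left eigenvector $v^*$ of $W$ (a nonzero row vector with $v^*W=\lambda v^*$) satisfying $v^* y(0)=0$. Because $W$ is cyclic, each distinct eigenvalue $\lambda$ has geometric multiplicity $1$, so $\operatorname{rank}(\lambda I_N-W)=N-1$ and the left null space of $\lambda I_N-W$ is spanned by a single nonzero row vector $v_\lambda^*$. Thus $(W,y(0))$ fails to be controllable precisely when $v_\lambda^* y(0)=0$ for some eigenvalue $\lambda$.

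This route in fact yields the ``almost every'' conclusion directly and pinpoints the exceptional set. For each of the finitely many distinct eigenvalues $\lambda$, the real vectors $y(0)$ with $v_\lambda^* y(0)=0$ are exactly those annihilated by both $\operatorname{Re}(v_\lambda^*)$ and $\operatorname{Im}(v_\lambda^*)$; since $v_\lambda^*\ne0$, at least one of these real row vectors is nonzero, so this set is contained in a hyperplane of $\mathbb{R}^N$ and has measure zero. The set of all bad $y(0)$ is the union of these finitely many measure-zero sets, hence itself of measure zero, completing the argument (and, a fortiori, showing $\det A\not\equiv0$ for the polynomial route).

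The main obstacle I anticipate is the careful treatment of complex eigenvalues and their left eigenvectors while keeping $y(0)$ real: one must verify that the constraint $v_\lambda^* y(0)=0$ still carves out only a measure-zero subset of $\mathbb{R}^N$, which follows from the nonvanishing of $v_\lambda^*$ but deserves explicit justification via its real and imaginary parts. A secondary point requiring care is confirming that cyclicity is precisely what forces a one-dimensional left null space for each $\lambda I_N-W$, so that a single linear functional $v_\lambda^*$ governs controllability at $\lambda$; indeed, without geometric multiplicity $1$, Lemma~\ref{lem:Wnotcyc} shows that \emph{no} real $y(0)$ can work, confirming that the cyclicity hypothesis is indispensable.
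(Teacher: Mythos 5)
Your proof is correct, but it takes a genuinely different route from the paper. The paper disposes of this lemma in one line by citing Lemma~3.12 of \cite{ZhouK96} (if $\mathcal{A}$ is cyclic and $(\mathcal{A},\mathcal{B})$ is controllable, then $(\mathcal{A},\mathcal{B}v)$ is controllable for almost every $v$), applied with $\mathcal{B}=I_N$ and $v=y(0)$, using the trivial fact that $(W,I_N)$ is controllable. You instead prove the statement from scratch via the PBH eigenvector test: cyclicity forces each left eigenspace of $W$ to be one-dimensional, so the set of bad $y(0)$ is the finite union over distinct eigenvalues $\lambda$ of $\{y:v_\lambda^*y=0\}$, and your real/imaginary-part argument correctly shows each such set lies in a real hyperplane and hence has measure zero. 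This is sound: the rank equality $\operatorname{rank}(\lambda I_N-W)=\operatorname{rank}(\lambda I_N-W^T)$ does give a one-dimensional left null space, and a finite union of measure-zero sets is measure zero. What your approach buys is self-containedness and an explicit description of the exceptional set (which the textbook citation hides); it is essentially the proof of the cited Lemma~3.12 specialized to $\mathcal{B}=I_N$. What the paper's approach buys is brevity and reuse of a standard reference already invoked elsewhere (the same book supplies the rank test used in Lemma~\ref{lem:Wnotcyc}). Your opening paragraph about $\det A$ being a homogeneous polynomial of degree $N$ in $y(0)$ is also a valid third route, though as you note it becomes redundant once the PBH argument delivers the measure-zero conclusion directly.
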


\begin{proof}
According to Lemma~3.12 in \cite{ZhouK96}, if $\mathcal{A}\in\mathbb{R}^{n\times n}$ is cyclic and $\mathcal{B}\in\mathbb{R}^{n\times m}$ is such that the pair $(\mathcal{A},\mathcal{B})$ is controllable, then for almost every $v\in\mathbb{R}^m$, the pair $(\mathcal{A},\mathcal{B}v)$ is controllable. Applying this lemma with $\mathcal{A}=W$, $\mathcal{B}=I_N$, and $v=y(0)$, and using the fact that the pair $(W,I_N)$ is controllable, we conclude that so is the pair $(W,y(0))$.
\end{proof}

Proposition~\ref{pro:AWy0} and Lemma~\ref{lem:Wnotcyc} imply that $W$ being cyclic is necessary for $A$ in \eqref{eq:s9} or \eqref{eq:s10} to be nonsingular. Lemma~\ref{lem:Wcyc}, on the other hand, implies that $W$ being cyclic is essentially sufficient because almost every $y(0)\in\mathbb{R}^N$ would work. This latter result is especially useful in a decentralized network because the result allows each node $i\in\mathcal{V}$ to select its $y_i(0)\in\mathbb{R}$ independently from other nodes and randomly from any continuous probability distribution before executing \eqref{eq:s1} or \eqref{eq:s2}, and be almost sure that the resulting $A$ would be nonsingular.

Motivated by the above analysis, in the rest of this paper we consider separately the following two scenarios:

\begin{scenario}\label{sce:know}
The nodes know that $W$ is cyclic.
\end{scenario}

\begin{scenario}\label{sce:notknow}
The nodes do not know whether $W$ is cyclic, or know that $W$ is not cyclic.
\end{scenario}

We consider Scenarios~\ref{sce:know} and~\ref{sce:notknow} separately because Scenario~\ref{sce:know} is easier to deal with (in Section~\ref{sec:solvlineequascenknow}) and its treatment helps us deal with Scenario~\ref{sce:notknow} (in Section~\ref{sec:solvlineequascennotknow}). We note that both of these scenarios arise in applications. For instance, if the graph $\mathcal{G}$ represents a sensor network and the entries $w_{ii}$ $\forall i\in\mathcal{V}$ and $w_{ij}$ $\forall\{i,j\}\in\mathcal{E}$ of $W$ represent random sensor measurements with continuous probability distributions, then Scenario~\ref{sce:know} takes place as the nodes could say with near certainty that $W$ is cyclic because almost every $n$-by-$n$ matrix has $n$ distinct eigenvalues and, thus, is cyclic. In contrast, if $W$ represents the adjacency or Laplacian matrix of $\mathcal{G}$, then Scenario~\ref{sce:notknow} takes place as $W$ would be cyclic if $\mathcal{G}$ is, say, a path graph \cite{Chung97} and would not be cyclic if $\mathcal{G}$ is, say, a complete or cycle graph \cite{Chung97}, which the nodes could not tell because they only have local information about $\mathcal{G}$.

To summarize, in this section we have transformed the problem of finding the spectrum \eqref{eq:lambda} of $W$ into one of solving the set of linear equations \eqref{eq:s9}, in which each node $i\in\mathcal{V}$ knows the entire row $i$ of $A$ and $b$, and in which $A$ can be made almost surely nonsingular in Scenario~\ref{sce:know}, but not necessarily so in Scenario~\ref{sce:notknow}.

\section{Solving the Set of Linear Equations in Scenario~\ref{sce:know}}\label{sec:solvlineequascenknow}

In this section, we focus on Scenario~\ref{sce:know} and develop a continuous-time distributed algorithm that enables the nodes to asymptotically solve the set of linear equations \eqref{eq:s9} with an exponential rate of convergence.

To facilitate the development, we assume that the nodes have executed \eqref{eq:s1} or \eqref{eq:s2} to arrive at \eqref{eq:s9}. Moreover, since $A$ in \eqref{eq:s9} can be made almost surely nonsingular in this Scenario~\ref{sce:know}, we assume that it {\em is} nonsingular throughout the section. With these assumptions, for each $i\in\mathcal{V}$ let $a_i=\begin{bmatrix}y_i(0) & y_i(1) & \cdots & y_i(N-1)\end{bmatrix}^T\in\mathbb{R}^N$ and $b_i=-y_i(N)\in\mathbb{R}$, so that \eqref{eq:s9} may be stated as
\begin{align}
\underbrace{\begin{bmatrix}\;\text{---}\;a_1^T\;\text{---}\;\\ \;\text{---}\;a_2^T\;\text{---}\;\\ \vdots\\ \;\text{---}\;a_N^T\;\text{---}\;\end{bmatrix}}_{A}\underbrace{\begin{bmatrix}x^{(0)}\\ x^{(1)}\\ \vdots\\ x^{(N-1)}\end{bmatrix}}_{x^*}=\underbrace{\begin{bmatrix}b_1\\ b_2\\ \vdots\\ b_N\end{bmatrix}}_{b},\label{eq:Ax*=b}
\end{align}
where $a_i$ and $b_i$ are known to node $i$ because \eqref{eq:s1} or \eqref{eq:s2} has been executed. In addition to knowing $a_i$ and $b_i$, suppose each node $i\in\mathcal{V}$ maintains in its local memory an estimate $x_i(t)=\begin{bmatrix}x_i^{(0)}(t) & x_i^{(1)}(t) & \cdots & x_i^{(N-1)}(t)\end{bmatrix}^T\in\mathbb{R}^N$ of the unknown, unique solution $x^*\in\mathbb{R}^N$, where here $t\in[0,\infty)$ denotes continuous-time (unlike in Section~\ref{sec:formlineequa} where $t\in\mathbb{Z}_+$ denotes discrete-time). Furthermore, let $\mathbf{x}(t)=(x_1(t),x_2(t),\ldots,x_N(t))\in\mathbb{R}^{N^2}$ and $\mathbf{x}^*=(x^*,x^*,\ldots,x^*)\in\mathbb{R}^{N^2}$ be vectors obtained by stacking the $N$ estimates $x_i(t)$'s and $N$ copies of the solution $x^*$.

To come up with a distributed algorithm that gradually drives $\mathbf{x}(t)$ to $\mathbf{x}^*$, consider a quadratic Lyapunov function candidate $V:\mathbb{R}^{N^2}\rightarrow\mathbb{R}$, defined as
\begin{align}
V(\mathbf{x})=\sum_{i\in\mathcal{V}}\alpha_i(a_i^Tx_i-b_i)^2+\sum_{\{i,j\}\in\mathcal{E}}\beta_{\{i,j\}}(x_i-x_j)^T(x_i-x_j),\label{eq:V}
\end{align}
where $\alpha_i>0$ $\forall i\in\mathcal{V}$ and $\beta_{\{i,j\}}>0$ $\forall\{i,j\}\in\mathcal{E}$ are parameters. Notice that each term in the first summation in \eqref{eq:V} is a measure of how far away from the hyperplane $\{z\in\mathbb{R}^N:a_i^Tz=b_i\}$ the estimate $x_i(t)$ is. Moreover, because $A$ is nonsingular and because of \eqref{eq:Ax*=b}, the $N$ hyperplanes $\{z\in\mathbb{R}^N:a_i^Tz=b_i\}$ $\forall i\in\mathcal{V}$ have a unique intersection at $x^*$. Furthermore, the second summation in \eqref{eq:V} is a measure of the disagreement among the estimates $x_i(t)$'s. Hence, both the first and second summations in \eqref{eq:V} are only positive semidefinite functions of $\mathbf{x}$. However, as the following proposition shows, adding them up makes $V$ a legitimate Lyapunov function candidate:

\begin{proposition}\label{pro:AVpd}
If $A$ in \eqref{eq:Ax*=b} is nonsingular, then the function $V$ in \eqref{eq:V} is positive definite with respect to $\mathbf{x}^*$.
\end{proposition}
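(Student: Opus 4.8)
The plan is to verify the two defining properties of positive definiteness with respect to $\mathbf{x}^*$: that $V(\mathbf{x}^*)=0$, and that $V(\mathbf{x})>0$ for every $\mathbf{x}\neq\mathbf{x}^*$. The first property is immediate. At $\mathbf{x}=\mathbf{x}^*$ every estimate $x_i$ equals $x^*$, so each disagreement term $x_i-x_j$ vanishes; and since $Ax^*=b$ means $a_i^Tx^*=b_i$ for every $i\in\mathcal{V}$, each residual $a_i^Tx^*-b_i$ vanishes as well. Hence every summand in \eqref{eq:V} is zero and $V(\mathbf{x}^*)=0$.

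For the second property, I would first observe that $V(\mathbf{x})\ge0$ for all $\mathbf{x}$, since it is a sum of squared quantities weighted by the strictly positive coefficients $\alpha_i$ and $\beta_{\{i,j\}}$. It therefore suffices to show that $V(\mathbf{x})=0$ forces $\mathbf{x}=\mathbf{x}^*$. Because all coefficients are strictly positive and all terms are nonnegative, $V(\mathbf{x})=0$ requires every summand to vanish individually: $a_i^Tx_i-b_i=0$ for each $i\in\mathcal{V}$, and $x_i=x_j$ for each edge $\{i,j\}\in\mathcal{E}$.

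The crux of the argument is then to combine these two families of conditions. Since $\mathcal{G}$ is connected, the equalities $x_i=x_j$ across all edges propagate along paths to yield a common value $x_i=\bar{x}$ for every $i\in\mathcal{V}$. Substituting this common value into the residual conditions gives $a_i^T\bar{x}=b_i$ for all $i$, i.e.\ $A\bar{x}=b$. Because $A$ is nonsingular, this linear system has the unique solution $\bar{x}=x^*$, so $x_i=x^*$ for every $i$ and thus $\mathbf{x}=\mathbf{x}^*$, which establishes the claim.

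The main obstacle---or rather the single place where the hypotheses are genuinely invoked---is the passage from edgewise equality of the estimates to their global equality, which rests on the connectivity of $\mathcal{G}$ assumed in the problem formulation, together with the nonsingularity of $A$ that pins the common value down to $x^*$. The remaining manipulations are routine algebra.
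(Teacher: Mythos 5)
Your proof is correct and follows essentially the same route as the paper's: both establish nonnegativity, observe that $V(\mathbf{x})=0$ forces each weighted summand to vanish, use connectivity of $\mathcal{G}$ to propagate edgewise equality to a common value $\bar{x}$, and invoke nonsingularity of $A$ to conclude $\bar{x}=x^*$. No gaps.
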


\begin{proof}
Clearly, $V$ is a positive semidefinite function of $\mathbf{x}$. To show that it is positive definite with respect to $\mathbf{x}^*$, we show that $V(\mathbf{x})=0$ if and only if $\mathbf{x}=\mathbf{x}^*$. Suppose $\mathbf{x}=\mathbf{x}^*$. Then, $a_i^Tx_i-b_i=0$ $\forall i\in\mathcal{V}$ according to \eqref{eq:Ax*=b}. In addition, the second summation in \eqref{eq:V} drops out. Therefore, $V(\mathbf{x})=0$. Next, suppose $V(\mathbf{x})=0$. Then,
\begin{align}
a_i^Tx_i&=b_i,\quad\forall i\in\mathcal{V},\label{eq:aixi=bi}\\
x_i&=x_j,\quad\forall\{i,j\}\in\mathcal{E}.\label{eq:xi=xj}
\end{align}
Since $\mathcal{G}$ is connected, \eqref{eq:xi=xj} implies that there exists $\tilde{x}\in\mathbb{R}^N$ such that $x_i=\tilde{x}$ $\forall i\in\mathcal{V}$. Substituting this into \eqref{eq:aixi=bi}, we get $a_i^T\tilde{x}=b_i$ $\forall i\in\mathcal{V}$ or, equivalently, $A\tilde{x}=b$. Since $A$ is nonsingular, we have $\tilde{x}=x^*$, so that $\mathbf{x}=\mathbf{x}^*$.
\end{proof}

\begin{remark}\label{rem:V}
Notice that $V$ in \eqref{eq:V} can also be written as
\begin{align*}
V(\mathbf{x})=(\mathbf{x}-\mathbf{x}^*)^TP(\mathbf{x}-\mathbf{x}^*),
\end{align*}
where $P=P^T\in\mathbb{R}^{N^2\times N^2}$ is positive definite and given by
\begin{align*}
P=\begin{bmatrix}\alpha_1a_1a_1^T & & & 0\\ & \alpha_2a_2a_2^T & &\\ & & \ddots &\\ 0 & & &\alpha_Na_Na_N^T\end{bmatrix}+L_\beta\otimes I_N,
\end{align*}
where $\otimes$ denotes the Kronecker product and $L_\beta=[L_{ij}]\in\mathbb{R}^{N\times N}$ is a weighted Laplacian matrix of $\mathcal{G}$ with $L_{ii}=\sum_{j\in\mathcal{N}_i}\beta_{\{i,j\}}$, $L_{ij}=-\beta_{\{i,j\}}$ if $\{i,j\}\in\mathcal{E}$, and $L_{ij}=0$ if $i\ne j$ and $\{i,j\}\notin\mathcal{E}$.\qed
\end{remark}

With Proposition~\ref{pro:AVpd} in hand, we next take the time derivative of $V$ along the state trajectory $\mathbf{x}(t)$ to obtain
\begin{align}
\dot{V}(\mathbf{x}(t))=2\sum_{i\in\mathcal{V}}\Bigl[\alpha_i(a_i^Tx_i(t)-b_i)a_i+\sum_{j\in\mathcal{N}_i}\beta_{\{i,j\}}(x_i(t)-x_j(t))\Bigr]\dot{x}_i(t),\quad\forall t\in[0,\infty).\label{eq:Vdot}
\end{align}
Examining \eqref{eq:Vdot}, we see that $\dot{V}(\mathbf{x}(t))$ can be made negative semidefinite---at the very least---by letting each $\dot{x}_i(t)$ be the negative of the expression within the brackets in \eqref{eq:Vdot}, i.e.,
\begin{align}
\dot{x}_i(t)=-\alpha_i(a_i^Tx_i(t)-b_i)a_i-\sum_{j\in\mathcal{N}_i}\beta_{\{i,j\}}(x_i(t)-x_j(t)),\quad\forall i\in\mathcal{V},\;\forall t\in[0,\infty).\label{eq:xidot}
\end{align}
The following theorem asserts that the continuous-time system \eqref{eq:xidot} possesses an excellent property:

\begin{theorem}\label{thm:uniqeqptges}
If $A$ in \eqref{eq:Ax*=b} is nonsingular, then the system \eqref{eq:xidot} has a unique equilibrium point at $\mathbf{x}^*$ that is globally exponentially stable, so that $\forall\mathbf{x}(0)\in\mathbb{R}^{N^2}$, $\lim_{t\rightarrow\infty}\mathbf{x}(t)=\mathbf{x}^*$, i.e., $\lim_{t\rightarrow\infty}x_i(t)=x^*$ $\forall i\in\mathcal{V}$.
\end{theorem}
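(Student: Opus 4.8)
The plan is to exploit the fact that the feedback law \eqref{eq:xidot} was engineered precisely to cancel the bracketed vector in \eqref{eq:Vdot}, and then to recognize that the resulting closed loop is in fact a \emph{linear} time-invariant system whose system matrix is exactly the positive definite matrix $P$ identified in Remark~\ref{rem:V}. Concretely, I would first stack the $N$ equations \eqref{eq:xidot} and rewrite them as $\dot{\mathbf{x}}(t)=-P\mathbf{x}(t)+d$, where $P=P^T\succ0$ is the matrix of Remark~\ref{rem:V} and $d\in\mathbb{R}^{N^2}$ is the constant vector whose $i$-th block is $\alpha_i b_i a_i$. The only bookkeeping here is to verify that the term $\alpha_i a_i a_i^T x_i$ reproduces the $i$-th block of the block-diagonal part of $P$ and that $\sum_{j\in\mathcal{N}_i}\beta_{\{i,j\}}(x_i-x_j)$ reproduces the $i$-th block of $(L_\beta\otimes I_N)\mathbf{x}$.

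Second, I would identify the equilibrium. Because $Ax^*=b$ by \eqref{eq:Ax*=b}, substituting $\mathbf{x}=\mathbf{x}^*$ makes each $a_i^Tx^*-b_i=0$ and each $x_i-x_j=0$, so $\dot{\mathbf{x}}=0$; equivalently $P\mathbf{x}^*=d$. Since $P$ is nonsingular (Remark~\ref{rem:V} / Proposition~\ref{pro:AVpd}), $\mathbf{x}^*=P^{-1}d$ is the \emph{unique} equilibrium, which settles the first assertion of the theorem.

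Third, for global exponential stability I would pass to the error $\mathbf{e}(t)=\mathbf{x}(t)-\mathbf{x}^*$, which obeys the homogeneous system $\dot{\mathbf{e}}=-P\mathbf{e}$. One clean route is to note that $-P$ is Hurwitz (its eigenvalues are $-\lambda_k(P)<0$), so $\mathbf{e}(t)=e^{-Pt}\mathbf{e}(0)$ decays exponentially for every initial condition. Alternatively, and more in keeping with the Lyapunov development, I would use $V(\mathbf{x})=\mathbf{e}^TP\mathbf{e}$ together with the computation already in \eqref{eq:Vdot}: after substituting \eqref{eq:xidot}, the bracketed vector in \eqref{eq:Vdot} is exactly the $i$-th block of $P\mathbf{e}$, so $\dot V=-2\|P\mathbf{e}\|^2=-2\,\mathbf{e}^TP^2\mathbf{e}$. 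Using the inequality $\mathbf{e}^TP^2\mathbf{e}\ge\lambda_{\min}(P)\,\mathbf{e}^TP\mathbf{e}$ (immediate upon diagonalizing $P$) yields $\dot V\le-2\lambda_{\min}(P)\,V$, whence $V(\mathbf{x}(t))\le V(\mathbf{x}(0))e^{-2\lambda_{\min}(P)t}$ by the comparison lemma; combined with the sandwich $\lambda_{\min}(P)\|\mathbf{e}\|^2\le V\le\lambda_{\max}(P)\|\mathbf{e}\|^2$ this gives $\|\mathbf{e}(t)\|\le\sqrt{\lambda_{\max}(P)/\lambda_{\min}(P)}\,\|\mathbf{e}(0)\|\,e^{-\lambda_{\min}(P)t}$, i.e.\ global exponential stability, and in particular $\lim_{t\to\infty}x_i(t)=x^*$ for all $i\in\mathcal{V}$.

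I do not expect a genuine obstacle here: once the closed loop is seen as $\dot{\mathbf{x}}=-P\mathbf{x}+d$ with $P\succ0$, both claims reduce to standard linear-systems facts. The only points requiring care are (i) confirming that the stacked dynamics really do have system matrix equal to the $P$ of Remark~\ref{rem:V}, rather than some other positive definite matrix, and (ii) the elementary eigenvalue inequality $\mathbf{e}^TP^2\mathbf{e}\ge\lambda_{\min}(P)\,\mathbf{e}^TP\mathbf{e}$ needed to upgrade the negative definiteness of $\dot V$ into the strict exponential bound $\dot V\le-2\lambda_{\min}(P)\,V$.
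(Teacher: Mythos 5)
Your proposal is correct, and it takes a genuinely different route from the paper's for part of the argument. The paper proves uniqueness of the equilibrium ``structurally'': it sets the right-hand side of \eqref{eq:xidot} to zero, sums over $i\in\mathcal{V}$ so that the $\beta$-terms cancel by antisymmetry, invokes linear independence of $a_1,\ldots,a_N$ (from $A$ nonsingular) to get $a_i^Tx_i=b_i$ for all $i$, and then uses connectivity of $\mathcal{G}$ via the Laplacian identity \eqref{eq:t4} to force consensus and hence $\mathbf{x}=\mathbf{x}^*$; for stability it simply observes that $\dot V$ is negative definite with respect to $\mathbf{x}^*$ and concludes global exponential stability. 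You instead observe that the closed loop is the LTI system $\dot{\mathbf{x}}=-P\mathbf{x}+d$ with $P\succ0$ the matrix of Remark~\ref{rem:V}, which collapses uniqueness to $\mathbf{x}^*=P^{-1}d$ and stability to $-P$ being Hurwitz (your block-by-block verification of the system matrix, of $P\mathbf{x}^*=d$ via the zero row sums of $L_\beta$, and of the inequality $\mathbf{e}^TP^2\mathbf{e}\ge\lambda_{\min}(P)\,\mathbf{e}^TP\mathbf{e}$ are all sound). What your route buys is an explicit decay rate $\lambda_{\min}(P)$ and, more importantly, a justification of the word \emph{exponential}: the paper's leap from ``$\dot V$ negative definite'' to ``globally exponentially stable'' is only valid because $V$ and $-\dot V$ are quadratic forms sandwiched between positive multiples of $\|\mathbf{e}\|^2$, which is exactly the computation you supply. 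What the paper's route buys is independence from the unproved positive-definiteness assertion in Remark~\ref{rem:V} (your argument leans on $P\succ0$, which the paper only asserts, though it does follow from Proposition~\ref{pro:AVpd} since $V(\mathbf{x})=(\mathbf{x}-\mathbf{x}^*)^TP(\mathbf{x}-\mathbf{x}^*)$ is a quadratic form), and an equilibrium argument whose graph-theoretic structure mirrors the distributed nature of the algorithm.
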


\begin{proof}
For each $i\in\mathcal{V}$, setting $\dot{x}_i(t)$ in \eqref{eq:xidot} to zero yields
\begin{align}
0=-\alpha_i(a_i^Tx_i-b_i)a_i-\sum_{j\in\mathcal{N}_i}\beta_{\{i,j\}}(x_i-x_j).\label{eq:t1}
\end{align}
Summing both sides of \eqref{eq:t1} over $i\in\mathcal{V}$ gives
\begin{align}
0=\sum_{i\in\mathcal{V}}-\alpha_i(a_i^Tx_i-b_i)a_i.\label{eq:t2}
\end{align}
Due to \eqref{eq:Ax*=b} and to $A$ being nonsingular, the vectors $a_1,a_2,\ldots,a_N$ in \eqref{eq:t2} are linearly independent in $\mathbb{R}^N$. Thus,
\begin{align}
0=-\alpha_i(a_i^Tx_i-b_i),\quad\forall i\in\mathcal{V}.\label{eq:t3}
\end{align}
Substituting \eqref{eq:t3} back into \eqref{eq:t1} results in
\begin{align*}
0=\sum_{j\in\mathcal{N}_i}\beta_{\{i,j\}}(x_i-x_j),\quad\forall i\in\mathcal{V},
\end{align*}
which is equivalent to
\begin{align}
0=(L_\beta\otimes I_N)\mathbf{x},\label{eq:t4}
\end{align}
where $\otimes$ and $L_\beta$ have been defined in Remark~\ref{rem:V}. Since $\mathcal{G}$ is connected, \eqref{eq:t4} implies that $x_i=\tilde{x}$ $\forall i\in\mathcal{V}$ for some $\tilde{x}\in\mathbb{R}^N$. Plugging this into \eqref{eq:t3} yields $a_i^T\tilde{x}=b_i$ $\forall i\in\mathcal{V}$. Since $A$ is nonsingular, we have $\tilde{x}=x^*$, i.e., $\mathbf{x}=\mathbf{x}^*$. Hence, the system \eqref{eq:xidot} has a unique equilibrium point at $\mathbf{x}^*$. Since for each $i\in\mathcal{V}$ the right-hand side of \eqref{eq:xidot} is the negative of the expression within the brackets in \eqref{eq:Vdot}, $\dot{V}(\mathbf{x}(t))$ is negative definite with respect to $\mathbf{x}^*$. Therefore, the equilibrium point $\mathbf{x}^*$ is globally exponentially stable.
\end{proof}

Having established Theorem~\ref{thm:uniqeqptges}, we now relate it back to the original problem of finding the spectrum \eqref{eq:lambda} of $W$. To this end, suppose each node $i\in\mathcal{V}$ maintains in its local memory an estimate $\lambda_i^{(\ell)}(t)\in\mathbb{C}$ of the unknown, $\ell$th eigenvalue $\lambda^{(\ell)}$ of $W$ for $\ell\in\{1,2,\ldots,N\}$. Also suppose at each time $t\in[0,\infty)$, node $i$ lets its $N$ estimates $\lambda_i^{(\ell)}(t)$'s be the roots of an $N$th-order polynomial formed by the estimate $x_i(t)=\begin{bmatrix}x_i^{(0)}(t) & x_i^{(1)}(t) & \cdots & x_i^{(N-1)}(t)\end{bmatrix}^T$ that is also stored in its local memory, i.e.,
\begin{align}
&(\lambda-\lambda_i^{(1)}(t))(\lambda-\lambda_i^{(2)}(t))\cdots(\lambda-\lambda_i^{(N)}(t))\nonumber\\
&\quad=\lambda^N+x_i^{(N-1)}(t)\lambda^{N-1}+\cdots+x_i^{(1)}(t)\lambda+x_i^{(0)}(t),\quad\forall i\in\mathcal{V},\;\forall t\in[0,\infty),\label{eq:rootfind}
\end{align}
which can be implemented using a polynomial root-finding algorithm embedded in node $i$. Then, because $(\lambda^{(1)},\lambda^{(2)},\ldots,\lambda^{(N)})$ in \eqref{eq:s7} is a continuous function of $x^*$, and $(\lambda_i^{(1)}(t),\lambda_i^{(2)}(t),\ldots,\lambda_i^{(N)}(t))$ in \eqref{eq:rootfind} is the {\em same} continuous function of $x_i(t)$, Theorem~\ref{thm:uniqeqptges} implies that
\begin{align}
\lim_{t\rightarrow\infty}\lambda_i^{(\ell)}(t)=\lambda^{(\ell)},\quad\forall i\in\mathcal{V},\;\forall\ell\in\{1,2,\ldots,N\}.\label{eq:limlambda}
\end{align}
Equation \eqref{eq:limlambda}, in turn, implies that the system \eqref{eq:xidot} is a continuous-time distributed algorithm that enables the nodes to asymptotically learn the spectrum \eqref{eq:lambda} of $W$.

Putting together the development in Sections~\ref{sec:formlineequa} and~\ref{sec:solvlineequascenknow}, we obtain the following two-stage distributed algorithm, which is applicable to this Scenario~\ref{sce:know}:

\begin{algorithm}[For Scenario~\ref{sce:know}]\label{alg:know}
\begin{algorithmstep}
\item Each node $i\in\mathcal{V}$ selects its $y_i(0)\in\mathbb{R}$ independently from other nodes and randomly from any continuous probability distribution.
\item Upon completion, the nodes execute \eqref{eq:s1} or \eqref{eq:s2} $N$ times for $t\in\{0,1,\ldots,N-1\}$, so that each node $i\in\mathcal{V}$ gradually learns the entire row $i$ of $A$ and $b$ in \eqref{eq:s9}.
\item Upon completion, the nodes execute \eqref{eq:xidot} and \eqref{eq:rootfind} for $t\in[0,\infty)$, so that each node $i\in\mathcal{V}$ is able to continuously update its $x_i(t)$ and $\lambda_i^{(\ell)}(t)$'s.
\end{algorithmstep}
\end{algorithm}

\begin{remark}\label{rem:lite}
The current literature offers a few distributed algorithms \cite{Nedic10, MouS13} that may be used to solve linear equations \eqref{eq:s9}. These algorithms are different from \eqref{eq:xidot} in that they force the state of each node to stay in an affine set, whereas \eqref{eq:xidot} allows the state to freely roam the state space.\qed
\end{remark}

\section{Solving the Set of Linear Equations in Scenario~\ref{sce:notknow}}\label{sec:solvlineequascennotknow}

In this section, we focus on Scenario~\ref{sce:notknow} and provide a slightly different algorithm that enables the nodes to approximately solve \eqref{eq:s9} with an error that can be made small.

Recall that Scenario~\ref{sce:notknow} represents a situation where the nodes either do not know whether $W$ is cyclic, or somehow know that $W$ is not cyclic. Consequently, they either do not know whether $A$ in \eqref{eq:s9} is nonsingular, or know that $A$ is singular. Although the nodes could still apply Algorithm~\ref{alg:know}, there is no guarantee that their estimates $x_i(t)$'s would converge to $x^*$. One way to address this issue is to have the nodes randomly perturb the matrix $W$ and vector $y(0)$, so that the resulting $A$ in \eqref{eq:s10} is hopefully nonsingular. Of course, such a random perturbation approach no longer allows them to asymptotically determine the exact spectrum of $W$. However, getting an estimate of the spectrum of $W$ may be sufficient in some applications. Thus, we will adopt this random perturbation approach in this Scenario~\ref{sce:notknow}.

For notational simplicity, let the matrix associated with the graph $\mathcal{G}$ be denoted as $\overline{W}=[\overline{w}_{ij}]\in\mathbb{R}^{N\times N}$ instead of $W=[w_{ij}]$, and let $W$ instead denote a perturbed version of $\overline{W}$. In addition, let $\overline{x}^{(\ell)}$'s and $\overline{\lambda}^{(\ell)}$'s denote, respectively, the characteristic polynomial coefficients and eigenvalues of $\overline{W}$ that the nodes wish to determine, and let $x^{(\ell)}$'s and $\lambda^{(\ell)}$'s denote those of $W$ as before. Moreover, let the perturbed matrix $W$ be obtained from $\overline{W}$ in a decentralized manner as follows: prior to executing \eqref{eq:s1} or \eqref{eq:s2}, each node $i\in\mathcal{V}$ lets
\begin{align}
w_{ii}&=\overline{w}_{ii}+\delta_{ii},\quad\forall i\in\mathcal{V},\label{eq:wii}\\
w_{ij}&=\overline{w}_{ij}+\delta_{ij},\quad\forall i\in\mathcal{V},\;\forall j\in\mathcal{N}_i,\label{eq:wij}
\end{align}
where the $\delta_{ii}$'s and $\delta_{ij}$'s are independent, uniformly distributed random variables in the interval $[-a,a]$, so that $a>0$ represents the perturbation magnitude. Notice that since $\overline{w}_{ij}=0$ $\forall i\in\mathcal{V}$ $\forall j\notin\{i\}\cup\mathcal{N}_i$ by Assumption~\ref{asm:W},
\begin{align}
w_{ij}=0,\quad\forall i\in\mathcal{V},\;\forall j\notin\{i\}\cup\mathcal{N}_i\label{eq:wij0}
\end{align}
as well. Also note that because the nodes are slated to select their $y_i(0)$'s independently and randomly from a continuous probability distribution, there is no need to further randomly perturb these $y_i(0)$'s.

The following lemma uses a structural controllability result to show that the aforementioned approach is effective:

\begin{lemma}\label{lem:randpert}
If $W$ is as defined in \eqref{eq:wii}--\eqref{eq:wij0} and $y(0)$ is as defined in Step~1 of Algorithm~\ref{alg:know}, then $A$ in \eqref{eq:s10} is almost surely nonsingular.
\end{lemma}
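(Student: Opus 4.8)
The plan is to combine Proposition~\ref{pro:AWy0} with a genericity argument driven by structural controllability. By Proposition~\ref{pro:AWy0}, the matrix $A$ in \eqref{eq:s10} is nonsingular precisely when the pair $(W,y(0))$ is controllable, i.e., when $\det A\ne0$. Collect all the free quantities into the vector $\theta=(\{\delta_{ii}\}_{i\in\mathcal{V}},\{\delta_{ij}\}_{i\in\mathcal{V},j\in\mathcal{N}_i},\{y_i(0)\}_{i\in\mathcal{V}})$. Each entry of $A=[\,y(0)\ Wy(0)\ \cdots\ W^{N-1}y(0)\,]$ is a polynomial in the entries of $W$ and of $y(0)$; since \eqref{eq:wii}--\eqref{eq:wij0} make those entries affine in $\theta$ (the fixed entries of $\overline{W}$ merely shift the origin), $\det A$ is a polynomial $p(\theta)$, and controllability fails exactly on the zero set $\{\theta:p(\theta)=0\}$.

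First I would observe that if $p$ is not the identically-zero polynomial, then its zero set is a proper algebraic variety and hence has Lebesgue measure zero; because the $\delta_{ii}$'s, $\delta_{ij}$'s, and $y_i(0)$'s are drawn independently from absolutely continuous distributions (the former uniform on $[-a,a]$, the latter continuous by Step~1 of Algorithm~\ref{alg:know}), the probability that $\theta$ lands in this null set is zero, giving almost sure nonsingularity. The entire argument therefore reduces to the single claim that $p\not\equiv0$, equivalently that there exists at least one admissible pair $(W,y(0))$ that is controllable.

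To establish $p\not\equiv0$ I would appeal to structural controllability. Regard $(W,y(0))$ as a structured single-input system whose free entries are exactly the diagonal positions $w_{ii}$, the edge positions $w_{ij}$ with $\{i,j\}\in\mathcal{E}$ in both directions (since $\delta_{ij}$ and $\delta_{ji}$ are perturbed independently), and all entries of the input vector $y(0)$; the remaining entries are structurally zero by Assumption~\ref{asm:W} and \eqref{eq:wij0}. Structural controllability of such a system is equivalent to (a) every state vertex being reachable from the input in the associated digraph and (b) the absence of dilations. Condition (a) holds because $y(0)$ has all free entries, so the input vertex has a direct edge to every state vertex; condition (b) holds because every node carries a self-loop (each $w_{ii}$ is free), so for any subset $S$ of state vertices its in-neighborhood already contains $S$ and hence has cardinality at least $|S|$. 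By Lin's structural controllability theorem the structured pair is therefore structurally controllable, meaning some numerical assignment of the free entries renders $(W,y(0))$ controllable. That assignment is realizable through \eqref{eq:wii}--\eqref{eq:wij0} via an invertible affine (translation) change of variables, which carries a nonzero polynomial to a nonzero polynomial, so $p\not\equiv0$.

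The main obstacle I anticipate is the verification of structural controllability itself, and in particular making the no-dilation (generic full-rank) condition precise and confirming that the self-loops supplied by the free diagonal entries discharge it cleanly; the measure-zero step and the affine-substitution step are routine once $p\not\equiv0$ is in hand.
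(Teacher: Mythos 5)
Your proof is correct, and while it shares the paper's overall skeleton---reduce nonsingularity of $A$ to controllability of $(W,y(0))$ via Proposition~\ref{pro:AWy0}, then argue genericity through structural controllability---it differs in both of the key steps. To certify structural controllability of the pattern $\mathcal{S}$, the paper exhibits an explicit controllable member, namely $(\operatorname{diag}(1,\ldots,N),\mathbf{1})$, whose controllability matrix is a nonsingular Vandermonde matrix; you instead invoke Lin's graph-theoretic characterization (every state vertex accessible from the input, no dilations), discharging accessibility via the fully free input vector and the dilation condition via the self-loops contributed by the free diagonal entries. Both certificates are valid; the paper's is more elementary and self-contained, while yours is more robust (it would survive, e.g., a restriction on which entries of $y(0)$ are free, as long as accessibility still held). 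More significantly, the two proofs diverge in how they pass from ``one controllable realization exists'' to ``almost every realization is controllable'': the paper cites Proposition~1 of \cite{LinCT74} to conclude that $\mathcal{S}_c$ is \emph{dense} in $\mathcal{S}$ and then asserts almost-sure membership, whereas you observe that $\det A$ is a polynomial $p(\theta)$ in the perturbations and initial conditions, not identically zero (by the invertible affine reparametrization), so its zero set has Lebesgue measure zero and is hit with probability zero by the absolutely continuous $\theta$. Your route is the tighter one here---density of $\mathcal{S}_c$ alone does not imply that its complement is null, so the polynomial-genericity argument is exactly the ingredient needed to make the paper's final ``almost surely'' rigorous.
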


\begin{proof}
Reconsider the graph $\mathcal{G}=(\mathcal{V},\mathcal{E})$ from Section~\ref{sec:probform}. Let $\mathcal{S}=\{(\mathcal{A},\mathcal{B})\in\mathbb{R}^{N\times N}\times\mathbb{R}^N:\mathcal{A}_{ij}=0\;\text{if}\;i\ne j\;\text{and}\;\{i,j\}\notin\mathcal{E}\}$ and $\mathcal{S}_c=\{(\mathcal{A},\mathcal{B})\in\mathcal{S}:(\mathcal{A},\mathcal{B})\;\text{is controllable}\}\subset\mathcal{S}$. In addition, let $\mathcal{A}^*=\operatorname{diag}(1,2,\ldots,N)\in\mathbb{R}^{N\times N}$ and $\mathcal{B}^*\in\mathbb{R}^N$ be the all-one vector. Then, $(\mathcal{A}^*,\mathcal{B}^*)\in\mathcal{S}$ according to the definition of $\mathcal{S}$. Moreover, $(\mathcal{A}^*,\mathcal{B}^*)\in\mathcal{S}_c$ because the controllability matrix formed by $(\mathcal{A}^*,\mathcal{B}^*)$ is a Vandermonde matrix that is nonsingular. These two properties of $(\mathcal{A}^*,\mathcal{B}^*)$, along with the definition of structural controllability \cite{LinCT74}, imply that every $(\mathcal{A},\mathcal{B})\in\mathcal{S}$ is structurally controllable. Next, let $(\mathcal{A},\mathcal{B})\in\mathcal{S}$ and $\epsilon>0$ be given. Then, by Proposition~1 of \cite{LinCT74}, there exists $(\mathcal{A}_c,\mathcal{B}_c)\in\mathcal{S}_c$ such that $\|\mathcal{A}-\mathcal{A}_c\|<\epsilon$ and $\|\mathcal{B}-\mathcal{B}_c\|<\epsilon$. Hence, $\mathcal{S}_c$ is a dense subset of $\mathcal{S}$. Lastly, note that $(W,y(0))\in\mathcal{S}$ due to Assumption~\ref{asm:W}, \eqref{eq:wii}--\eqref{eq:wij0}, and Step~1 of Algorithm~\ref{alg:know}. Since $\mathcal{S}_c$ is a dense subset of $\mathcal{S}$, $(W,y(0))$ is almost surely in $\mathcal{S}_c$. Therefore, by Proposition~\ref{pro:AWy0}, $A$ in \eqref{eq:s10} is almost surely nonsingular.
\end{proof}

As it follows from Lemma~\ref{lem:randpert}, by having the nodes perform the extra step described in \eqref{eq:wii}--\eqref{eq:wij0}, the results developed in Sections~\ref{sec:formlineequa} and~\ref{sec:solvlineequascenknow} become applicable to this Scenario~\ref{sce:notknow}. Furthermore, because both the characteristic polynomial coefficients and eigenvalues of a matrix are continuous functions of its entries, by having the nodes decrease the perturbation magnitude $a$ toward zero, the differences between the $x^{(\ell)}$'s and $\lambda^{(\ell)}$'s of $W$ and the $\overline{x}^{(\ell)}$'s and $\overline{\lambda}^{(\ell)}$'s of $\overline{W}$ can be made arbitrarily small, at least in principle. Note, however, that numerical issues may arise when $a$ is too small, or when the resulting $A$ is ill-conditioned. At present, we do not have answers to these numerical issues, and we believe they are important future research directions.

Based on the above, we obtain the following two-stage distributed algorithm for this Scenario~\ref{sce:notknow}:

\begin{algorithm}[For Scenario~\ref{sce:notknow}]\label{alg:notknow}
\begin{algorithmstep}
\item Each node $i\in\mathcal{V}$ executes \eqref{eq:wii}--\eqref{eq:wij0} to obtain a perturbed matrix $W$.
\item The remaining steps are identical to those of Algorithm~\ref{alg:know}.
\end{algorithmstep}
\end{algorithm}

\section{Simulation Results}\label{sec:simuresu}

In this section, we present two sets of simulation results that demonstrate the effectiveness of Algorithm~\ref{alg:know} for Scenario~\ref{sce:know} and Algorithm~\ref{alg:notknow} for Scenario~\ref{sce:notknow}.

\subsection{Simulation of Algorithm~\ref{alg:know} for Scenario~\ref{sce:know}}\label{ssec:simualgoscenknow}

\begin{figure}[tb]
\centering\subfigure[A $6$-node graph.]{\includegraphics[width=0.48\linewidth]{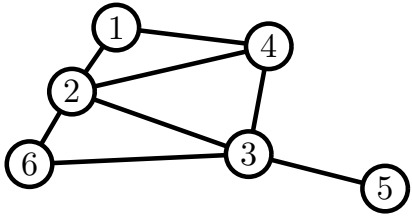}}\quad\subfigure[Data points $y_i(t)$ for $i\in\{1,2,\ldots,6\}$ and $t\in\{0,1,\ldots,6\}$ that form the set of linear equations \eqref{eq:s9}.]{\includegraphics[width=0.48\linewidth]{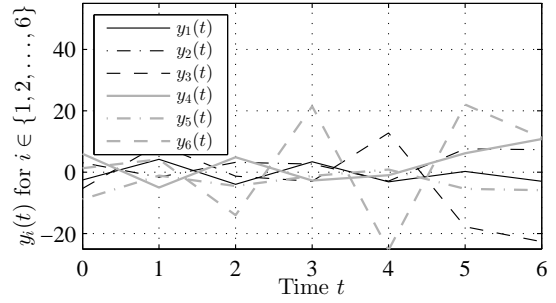}}\\ \subfigure[Node $3$'s estimate $x_3^{(\ell)}(t)$ of the $\ell$th characteristic polynomial coefficient $x^{(\ell)}$ for $\ell\in\{0,1,\ldots,5\}$.]{\includegraphics[width=0.48\linewidth]{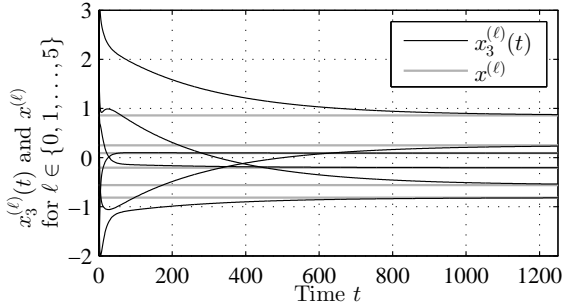}}\quad\subfigure[Node $i$'s estimate $x_i^{(1)}(t)$ of the $1$st characteristic polynomial coefficient $x^{(1)}$ for $i\in\{1,2,\ldots,6\}$.]{\includegraphics[width=0.48\linewidth]{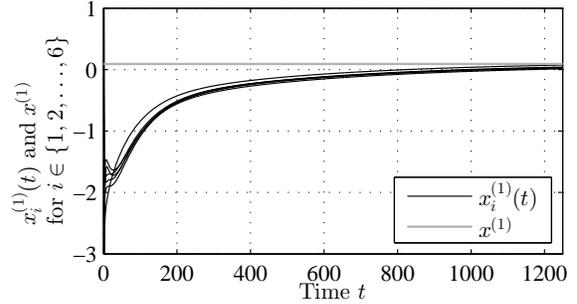}}
\caption{Performance of Algorithm~\ref{alg:know} for Scenario~\ref{sce:know}.}
\label{fig:sce1}
\end{figure}

Consider a sensor network with $N=6$ nodes, modeled as an undirected, connected graph $\mathcal{G}$, whose topology is shown in Figure~\ref{fig:sce1}(a). Suppose associated with the graph $\mathcal{G}$ is a $6$-by-$6$ matrix $W$, whose entries satisfy Assumption~\ref{asm:W} and represent random sensor measurements given by
\begin{align*}
W=\begin{bmatrix}-0.10 & -0.24 & 0 & 0.78 & 0 & 0\\ 0.24 & 0.53 & 0.39 & -0.04 & 0 & -0.19\\ 0 & 0.34 & 0.21 & 1.15 & -0.13 & 0.71\\-0.26 & -0.21 & 0.32 & -0.54 & 0 & 0\\ 0 & 0 & -0.45 & 0 & 0.39 & 0\\ 0 & 0.47 & -0.84 & 0 & 0 & -1.35\end{bmatrix}.
\end{align*}
Assuming that such measurements are realizations of continuously distributed random variables, the nodes are almost certain that $W$ is cyclic, so that Scenario~\ref{sce:know} takes place. Thus, to determine all the eigenvalues $\lambda^{(\ell)}$'s of $W$, which are given by $-1.02\pm0.55i$, $-0.004\pm0.46i$, $0.38$, and $0.81$, the nodes may apply Algorithm~\ref{alg:know}.

Figures~\ref{fig:sce1}(b)--\ref{fig:sce1}(d) display the result of simulating Algorithm~\ref{alg:know} with $\alpha_i=10$ $\forall i\in\mathcal{V}$ and $\beta_{\{i,j\}}=10$ $\forall\{i,j\}\in\mathcal{E}$. Specifically, Figure~\ref{fig:sce1}(b) shows the data points $y_i(t)$ for $i\in\{1,2,\ldots,6\}$ and $t\in\{0,1,\ldots,6\}$ that are used to form the set of linear equations \eqref{eq:s9}. Figure~\ref{fig:sce1}(c) shows, as a function of time $t$, node $3$'s estimate $x_3^{(\ell)}(t)$ of the $\ell$th characteristic polynomial coefficient $x^{(\ell)}$ of $W$ for $\ell\in\{0,1,\ldots,5\}$. Likewise, Figure~\ref{fig:sce1}(d) shows node $i$'s estimate $x_i^{(1)}(t)$ of the $1$st coefficient $x^{(1)}$ for $i\in\{1,2,\ldots,6\}$. (Due to space limitation, we are unable to include plots of $x_i^{(\ell)}(t)$ for all $i\in\{1,2,\ldots,6\}$ and $\ell\in\{0,1,\ldots,5\}$.) Observe that despite having only local information about $\mathcal{G}$ and $W$, the nodes are able to utilize Algorithm~\ref{alg:know} to asymptotically determine all the characteristic polynomial coefficients $x^{(\ell)}$'s of $W$ and, hence, all its eigenvalues $\lambda^{(\ell)}$'s.

\subsection{Simulation of Algorithm~\ref{alg:notknow} for Scenario~\ref{sce:notknow}}\label{ssec:simualgoscennotknow}

\begin{figure}[tb]
\centering\subfigure[A $6$-node graph.]{\includegraphics[width=0.48\linewidth]{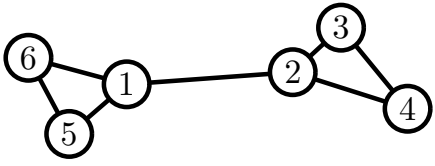}}\quad\subfigure[Data points $y_i(t)$ for $i\in\{1,2,\ldots,6\}$ and $t\in\{0,1,\ldots,6\}$ that form the set of linear equations \eqref{eq:s9}.]{\includegraphics[width=0.48\linewidth]{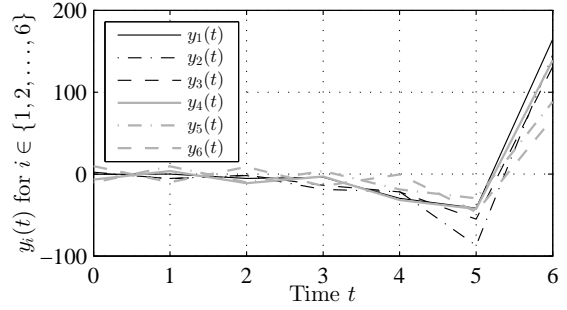}}\\ \subfigure[Node $2$'s estimate $x_2^{(\ell)}(t)$ of the $\ell$th perturbed and true characteristic polynomial coefficients $x^{(\ell)}$ and $\overline{x}^{(\ell)}$ for $\ell\in\{0,1,\ldots,5\}$.]{\includegraphics[width=0.48\linewidth]{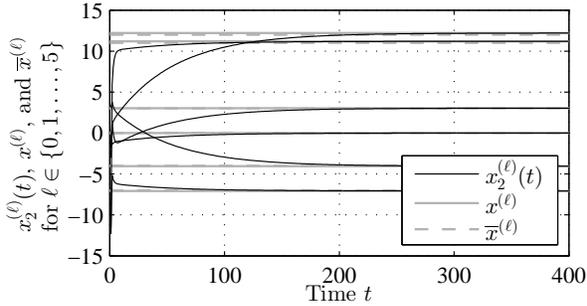}}\quad\subfigure[Node $i$'s estimate $x_i^{(2)}(t)$ of the $2$nd perturbed and true characteristic polynomial coefficients $x^{(2)}$ and $\overline{x}^{(2)}$ for $i\in\{1,2,\ldots,6\}$.]{\includegraphics[width=0.48\linewidth]{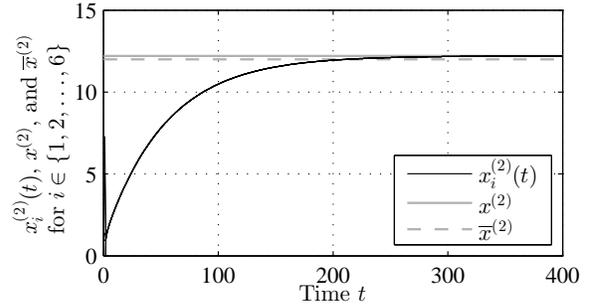}}
\caption{Performance of Algorithm~\ref{alg:notknow} for Scenario~\ref{sce:notknow}.}
\label{fig:sce2}
\end{figure}

Consider next an undirected, connected graph $\mathcal{G}$ with $N=6$ nodes, whose topology is shown in Figure~\ref{fig:sce2}(a). Let $\overline{W}$ represent the adjacency matrix of $\mathcal{G}$ and suppose the nodes wish to determine all the eigenvalues $\overline{\lambda}^{(\ell)}$'s of $\overline{W}$, which are given by $-1.73$, $-1$, $-1$, $-0.41$, $1.73$, and $2.41$. Because they only have local information about $\mathcal{G}$, the nodes do not know whether $\overline{W}$ is cyclic, so that Scenario~\ref{sce:notknow} takes place. (In fact, $\overline{W}$ in this particular example is not cyclic because it is symmetric and has repeated eigenvalues, at $-1$.) Therefore, the nodes have to apply Algorithm~\ref{alg:notknow}. In doing so, they let the perturbation magnitude be $a=0.2$ and obtain from \eqref{eq:wii}--\eqref{eq:wij0} a perturbed matrix $W$ given by
\begin{align*}
W=\begin{bmatrix}0 & 1.04 & 0 & 0 & 1.01 & 0.94\\ 0.98 & 0 & 1.04 & 1.12 & 0 & 0\\ 0 & 0.98 & 0 & 1.06 & 0 & 0\\0 & 0.95 & 1.01 & 0 & 0 & 0\\ 0.98 & 0 & 0 & 0 & 0 & 1.01\\ 0.97 & 0 & 0 & 0 & 0.92 & 0\end{bmatrix},
\end{align*}
whose eigenvalues $\lambda^{(\ell)}$'s are $-1.74$, $-0.97$, $-1.03$, $-0.40$, $1.73$, and $2.43$, which are all distinct and slightly different from the eigenvalues $\overline{\lambda}^{(\ell)}$'s of $\overline{W}$.

Figures~\ref{fig:sce2}(b)--\ref{fig:sce2}(d) display the result of simulating Algorithm~\ref{alg:notknow} with $\alpha_i=100$ $\forall i\in\mathcal{V}$ and $\beta_{\{i,j\}}=10$ $\forall\{i,j\}\in\mathcal{E}$, using a format similar to that of Figures~\ref{fig:sce1}(b)--\ref{fig:sce1}(d). The only difference is that Figures~\ref{fig:sce2}(c) and~\ref{fig:sce2}(d) show not only the characteristic polynomial coefficients $x^{(\ell)}$'s of the ``perturbed'' $W$, but also the characteristic polynomial coefficients $\overline{x}^{(\ell)}$'s of the ``true'' $\overline{W}$. Observe that with Algorithm~\ref{alg:notknow}, the nodes are able to asymptotically determine the $x^{(\ell)}$'s and $\lambda^{(\ell)}$'s. In other words, they are able to approximately calculate the $\overline{x}^{(\ell)}$'s and $\overline{\lambda}^{(\ell)}$'s with small errors.

\section{Conclusion}\label{sec:conc}

In this paper, we have designed and analyzed a two-stage distributed algorithm that enables nodes in a graph to cooperatively estimate the graph spectrum. We have shown that asymptotically accurate estimation can be achieved if the nodes know that the associated matrix is cyclic, and estimation with small errors can be achieved if they do not. As for future research, we believe that making the algorithm numerically more robust, so that it can cope with poorly conditioned $W$ and $A$, is an important next step.

\bibliographystyle{IEEEtran}
\bibliography{paper}

\end{document}